\newtheorem{Theorem}{Theorem}
\def\CMC{\textsf{Connected Max-Cover}}
\def\TADA{\textsf{Tada-Probe}}
\def\GPM{\textsf{GPM}}
\def\TGPM{\textsf{Tada-GPM}}
\def\BGPM{\textsf{batch-GPM}}
\def\AGPM{\textsf{ada-GPM}}
\def\LINREG{\textsf{LinReg}}
\def\LOGREG{\textsf{LogReg}}
\algnewcommand\algorithmicforeach{\textbf{for each}}
\begin{document}

\title{\Large Towards Optimal Strategy for Adaptive Probing in Incomplete Networks}
\author{Tri P. Nguyen, Hung T. Nguyen and Thang N. Dinh\thanks{Computer Science Department, Virginia Commonwealth University. \text{ \ \ \ \ \ \ } Email: \textit{\{trinpm,hungnt,tndinh\}@vcu.edu}} \\}
\date{}

\maketitle


\begin{abstract} \small 
	We investigate a graph probing problem in which an agent has only an incomplete view $G' \subsetneq G$ of the network and wishes to explore the network with least effort. In each step, the agent selects a node $u$ in $G'$ to probe. After probing $u$, the agent gains the information about $u$ and   its neighbors. All the neighbors of $u$ become \emph{observed} and are \emph{probable} in the subsequent steps (if they have not been probed). What is the best probing strategy to maximize the number of nodes explored in $k$ probes? This problem serves as a fundamental component for other decision-making problems in incomplete networks such as information harvesting in social networks, network crawling, network security, and viral marketing with incomplete information.

	While there are a few methods proposed for the problem,  none can perform consistently well across different network types. In this paper, we establish a strong (in)approximability for the problem, proving that no algorithm can guarantees finite approximation ratio unless P=NP. On the bright side, we design learning frameworks to capture the best probing strategies for individual network. 
	Our extensive experiments suggest that our framework can learn efficient probing strategies that \emph{consistently} outperform previous heuristics and metric-based approaches. 
\end{abstract}

\section{Introduction}

In many real-world networks, complete network topology is almost intractable to acquire, thus, most decisions are made based on incomplete networks. The impossibility to obtain complete network may stem from various sources: 1) the extreme size of the networks, e.g., Facebook, Twitter with billions of users and connections or the Internet spanning the whole planet, 2) privacy or security concerns, e.g., in Online social networks, we may not be able to see users' connections due to their own privacy settings to protect them from unwelcoming guests, 3) being hidden or undercover, e.g., terrorist networks in which only a small fraction is exposed and the rest is anonymous.

To support decision making processes based on local view and expand our observations of the networks, we investigate a network exploring problem, called \textit{Graph Probing Maximization (\GPM{})}. In \GPM{}, an agent is provided with an incomplete network $G'$ which is a subnetwork of an underlying real-world network $G \supsetneq G'$ and wants to explore $G$ swiftly through node probing. Once a node $u \in G'$ is probed, all neighbors $v \in G$ of $u$ will be observed and can be probed in the following steps. Given a budget $k$, the agent wishes to identify $k$ nodes from $G'$ to \emph{probe} to maximize the number of newly observed nodes.


Real-world applications of \GPM{} includes exploring terrorist networks to help in the planning of dismantling the network. Given an incomplete adversarial network, each suspect node can be ``probed'', e.g., getting court orders for communication record, to reveal further suspects. In cybersecurity on Online social networks (OSNs), intelligent attackers can gather users' information by sending friend requests to users \cite{Ng16}. Understanding the attacker strategies is critical in coming up with effective hardening policies. Another example is in viral marketing, from a partial observation of the network, a good probing strategy for new customers can lead to exploration of potential product sales.

While several heuristics are proposed for \GPM{} \cite{avr14y,Soundarajan15,Hanneke09,Masrour15}, they share two main drawbacks. First, they all consider selecting nodes in one batch. We argue that this strategy is ineffective as the information gained from probing nodes is not used in making the selection as shown recently \cite{Golovin10,Seeman13}. Secondly, they are metric-based methods which use a single measure to make decisions. However, real-world networks have diverse characteristics, such as different power-law degree distributions and a wide range of clustering coefficients. Thus, the proposed heuristics may be effective for particular classes of networks, but perform poorly for the others.

In this paper, we first formulate the Graph Probing Maximization and theoretically prove the strong inapproximability result that finding the optimal strategy based on local incomplete network cannot be approximated within any \textit{finite} factor. That means no polynomial time algorithm to approximate the optimal strategy within any finite multiplicative error. On the bright side, we design a novel machine learning framework that takes the local information, e.g., node centric indicators, as input and predict the best sequence of $k$ node probing to maximize the observed augmented network. We take into account a common scenario that there is available a reference network, e.g., past exposed terrorist networks when investigating an emerging one, with similar characteristics. Our framework learns a probing strategy by simulating many subnetwork scenarios from reference graph and learning the best strategy from those simulated samples.

The most difficulty in our machine learning framework is that of find the best probing strategy in sampled subnetwork scenarios from the reference network. That is how to characterize the potential gain of a \textit{probable} node in long-term (into future probing). We term this subproblem \textit{Topology-aware \GPM{}} (\TGPM{}) since both subnetwork scenario and underlying reference network are available. Therefore, we propose an $(\frac{1}{r+1})$-approximation algorithm for this problem where $r$ is the \textit{radius} of the optimal solution. Here, the radius of a solution is defined to be the largest distance from a selected node to the subnetwork. Our algorithm looks far away to the future gain of selecting a node and thus provides a nontrivial approximation guarantee. We further propose an effective heuristic improvement and study the optimal strategy by Integer Linear Programming.

Compared with metric-based methods with inconsistent performance, our learning framework can easily adapt to networks in different traits. As a result, our experiments on real-world networks in diverse disciplines confirm the superiority of our methods through consistently outperforming around 20\% the state-of-the-art methods.

Our contributions can be summarized as follows:
\begin{itemize}
	\item We first formulate the Graph Probing Maximization (\GPM{}) and show that none of existing metric-based methods consistently work well on different networks. Indeed, we rigorously prove a strong hardness result of \GPM{} problem: inapproximable within any finite factor.
    \item We propose a novel machine learning framework which looks into the future potential benefit of probing a node to make the best decision.
	\item We experimentally show that our new approach significantly and consistently improves the performance of the probing task compared to the current state-of-the-art methods on a large set of diverse real-world networks.
\end{itemize}
\textbf{Related works.}
Our work has a connection to the early network crawling literature \cite{Cho98,Chakrabarti02,Ester04,Chakrabarti99}. The website crawlers collect the public data and aim at finding the least effort strategy to gain as much information as possible. The common method to gather relevant and usable information is following the hyperlinks and expanding the domain.

Later, with the creation and explosive growth of OSNs, the attention was largely shifted to harvesting public information on these networks \cite{Chau07,Mislove07,Wwak10}. Chau et al. \cite{Chau07} were able to crawl approximately 11 million auction users on Ebay by a parallel crawler. The record of successful crawling belongs to Kwak et al. \cite{Wwak10} who gathered 41.7 million public user profiles, 1.47 billion social relations and 106 million tweets. However, these crawlers are limited to user public information due to the privacy setting feature on OSNs that protects private/updated data from unwelcoming users.

More recently, the new crawling technique of using socialbots \cite{Boshmaf12,Elishar12,Elyashar13,Fire14,Paradise15,Ng16} to friend the users and be able to see their private information. Boshmaf et al. \cite{Boshmaf12} proposed to build a large scale Socialbots system to infiltrate Facebook. The outcomes of their work are many-fold: they were able to infiltrate Facebook with success rate of 80\%; there is a possibility of privacy breaches and the security defense systems are not effective enough to prevent/stop Socialbots. The works in \cite{Elishar12,Elyashar13} focus on targeting a specific organization on Online social networks by using Socialbots to friend the organization's employees. As the results, they succeed in discovering hidden nodes/edges and achieving an acceptance rate of 50\% to 70\%.

Graph sampling and its applications have been widely studied in literature. For instance, Kim and Leskovec \cite{Kim11} study problem of inferring the unobserved parts of the network. They address network completion problem: Given a network with missing nodes and edges, how can ones complete the missing part? Maiya and Berger-Wolf \cite{Maiya10} propose sampling method that can effectively be used to infer and approximate community affiliation in the large network.

Most similar to our work, Soundarajan et al. propose MaxOutProbe \cite{Soundarajan15} probing method. MaxOutProbe estimates degrees of nodes to decide which nodes should be probed in partially observed network. This model shows better performance as compared with probing approaches based on node centralities (selecting nodes that have high degree or low local clustering). However, through experiments, we observe that MaxOutProbe's performance is still worst than the one that ranks nodes based on PageRank or Betweeness.

The probing process can be seen as a diffusion of deception in the network. Thus, it is related to the vast literature in cascading processes in the  network \cite{Nguyen10, Dinh12,Nguyen11over, Dinh14}.

\textbf{Organization}: The rest of this paper is divided into five main sections. Section~\ref{sec:model} presents the studied problem and the hardness results. We propose our approximation algorithm and machine learning model in Section~\ref{sec:algorithm}. Our comprehensive experiments are presented in Section~\ref{sec:exps} and followed by conclusion in Section~\ref{sec:con}.

\section{Problem Definitions and Hardness}
\label{sec:model}
We abstract the underlying network using an undirected graph $G = (V,E)$ where $V$ and $E$ are the sets of nodes and edges. $G$ is not completely observed, instead a subgraph $G' = (V',E')$ of $G$ is seen with $V' \subseteq V$, $E' \subseteq E$. Nodes in $G$ can be divided into three disjoint sets: $V^f, V^p$ and $V^u$ as illustrated in Fig.~\ref{fig:probing_process}.

\begin{figure}[!ht]
		\vspace{-0.2in}
		\centering
		\includegraphics[width=0.4\linewidth]{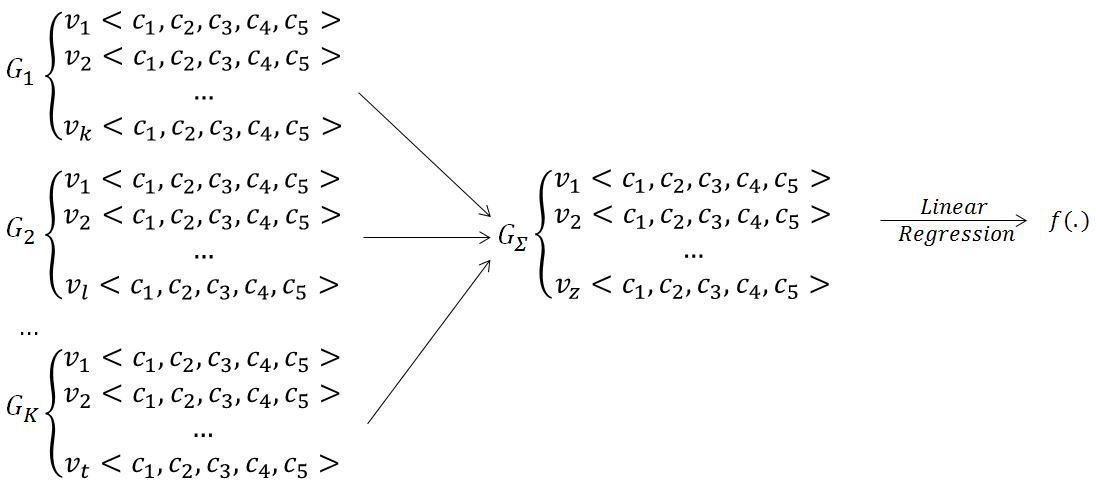}
		\vspace{-0.1in}
		\caption{Incomplete view $G'$ (shaded region) of an underlying network $G$. Nodes in $G'$ are partitioned into two disjoint subsets:  \textbf{black} nodes in $V^f$ which are fully observed and the \textbf{gray} nodes in $V^p$ which are only partially observed. Nodes outside of $G'$ are colored \textbf{white}}
		\label{fig:probing_process}
        \vspace{-0.1in}
\end{figure}

\begin{itemize}
	\item \underline{Black}/Fully observed nodes: $V^f$ contains fully observed nodes (probed) meaning all of their connections are revealed. That is if $u \in V^f$ and $(u,v) \in E$ then $(u,v) \in E'$.
    \item \underline{Gray}/Partially observed nodes: $V^p$ contains partially observed nodes $u$ that are adjacent to at least one fully probed node in $V^f$ and $u \notin V^f$. Only the connections between $u \in V^p$ and nodes in $V^f$ are observed while the others to unobserved nodes are hidden. Therefore, those nodes $u \in V^p$ become the only candidates for discovering unobserved nodes. Note that $V' = V^f \cup V^p$.
    \item \underline{White}/Unobserved nodes: $V^u = V \setminus V'$ consists of unobserved nodes. The nodes in $V^u$ have no connection to any node in $V^f$ but may be connected with nodes in $V^p$.
\end{itemize}

\emph{Node probing}: At each step, we select a candidate gray node $ u \in V^p$ to \emph{probe}. Once probed, all the neighbors of $u$ in $G$ and the corresponding edges are revealed. That is $u$ becomes a fully observed black node and each white neighbor node $v \in V^u$ of $u$ becomes gray and is also available to probe in the subsequent steps. We call the resulted graph after probing \textit{augmented graph} and use the same notation $G'$ when the context is clear. 
The main goal of \AGPM{} is to increase the size of $G'$ as much as possible.

\emph{Probing budget $k$}: In addition to the subgraph $G'$, a budget $k$ is given as the number of nodes we can probe. This budget $k$ may resemble the effort/cost that can be spent on probing. Given this budget $k$, our probing problem becomes selecting at most connected $k$ nodes that maximizes the size of augmented $G'$. Alternatively, we want to maximize the number of newly observed nodes.

We call our problem \textit{Graph Probing Maximization} (\GPM{}). There is a crucial consideration at this point: \textit{Should we select $k$ node at once or we should distribute the allowed budget in multiple steps?} The answer to this question leads to two different interesting versions: \textit{Non-Adaptive} and \textit{Adaptive}. We focus on the adaptive problem.

\begin{Definition} [Adaptive \GPM{} (\AGPM{})]
	Given an incomplete subnetwork $G'$ of $G$ and a budget $k$, the Adaptive Graph Probing Maximization problem asks for $k$ partially probed nodes in $k$ consecutive steps that maximizes the number of newly observed nodes in $G'$ at the end. 
\end{Definition}

In our \AGPM{} problem, at each step, a node is selected subject to observing the outcomes of all previous probing. This is in contrast to the non-adaptive version which asks to make $k$ selections from the initial $V^p$ at once. Thus, the adaptive probing manner is intuitively more effective than the non-adaptive counterparts. However, it is also considerably more challenging compared to non-adaptivity due to the vastly expanded search space.

\renewcommand{\arraystretch}{1.2}
\setlength\tabcolsep{5pt}
\begin{table}[hbt] \scriptsize
	\centering
	\caption{Highest performance (Perf.) metric-based methods.}
	\begin{tabular}{cc|cc|cc}
		\toprule
		\multicolumn{2}{c|}{\textbf{GnuTella}} & \multicolumn{2}{c|}{\textbf{Collaboration}} & \multicolumn{2}{c}{\textbf{Road}}  \\
		\hline
		\textbf{Top 5} & \textbf{Perf.} & \textbf{Top 5} & \textbf{Perf.} & \textbf{Top 5} & \textbf{Perf.}\\
		\midrule
		CLC & 2471 & BC & 2937 & CLC & 358 \\
		BC & 2341 & PR & 2108 & DEG & 346 \\
		CC & 1999 & CC & 2085 & BC & 346 \\
		PR & 1994 & CLC & 2061 & PR & 342 \\
		DEG & 1958 & DEG & 2048 & CC & 326 \\
		\bottomrule
	\end{tabular}
	\label{tbl:top5bnc}
    \vspace{-0.1in}
\end{table}

%

\subsection{Hardness and Inapproximability.}

\subsubsection{Empirical Observations.}
We show the inconsistency in terms of probing performance of metric-based methods, i.e., clustering coefficient (CLC), betweenness centrality (BC), closeness centrality (CC), Pagerank (PR), local degree (DEG), MaxOutProbe \cite{Soundarajan15} and random (RAND), through experiments on 3 real-world networks, i.e., Gnutella, Co-authorship and Road networks (see Sec.~\ref{sec:exps} for a detailed description). Our results are shown in Fig.~\ref{tbl:top5bnc}. From the figure, we see that the performance of metric-based methods varies significantly on different networks and none of them is consistently better than the others. For example, clustering coefficient-based method exhibits best results on Gnutella but very bad in Collaboration networks. On road networks, all methods seem to be comparable in performance.

\subsubsection{Inapproximability Result.}

Here, we provide the hardness results of the \AGPM{} problem.
Our stronger result of inapproximability
is shown in the following.

\begin{Theorem}
	\label{theo:hardness}
	\AGPM{} problem on a partially observed network cannot be approximated within any finite factor. Here, the inapproximability is with respect to the optimal solution obtained when the underlying network is known.
\end{Theorem}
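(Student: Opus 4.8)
The plan is to prove impossibility \emph{information-theoretically} rather than through any complexity assumption: since the agent acts on the observation $G'$ alone, many distinct underlying networks $G$ are consistent with the \emph{same} $G'$, and I will exhibit a family in which the agent's entire view is frozen while a large reward hidden behind the partially-observed frontier can be relocated adversarially. Because the benchmark is the optimum computed \emph{with full knowledge of $G$}, it suffices to show that no function of $G'$ can track where that reward sits. Concretely, given any probing algorithm $A$ (deterministic or randomized, with arbitrary running time) and any target factor $\alpha<\infty$, I will construct an instance on which $A$'s expected gain is below $(1/\alpha)\cdot\mathrm{OPT}$.

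First I would build a symmetric gadget. Take one already-probed black node $b\in V^f$ and attach $m$ gray nodes $g_1,\dots,g_m\in V^p$ (with $m\ge k+1$), each joined to $b$ by a single observed edge and to nothing else inside $G'$; thus in the agent's view the $g_i$ are mutually indistinguishable (the gadget admits an automorphism permuting them). The hidden part of $G$ is a set $W$ of $N$ fresh white nodes in $V^u$. The adversary selects an index $i^\star$ and, in the true network $G$, connects every node of $W$ to $g_{i^\star}$ and to no other gray node, leaving all remaining $g_j$ without white neighbours. Crucially, every edge incident to $W$ lies outside $G'$, so none of these choices alters the observed subgraph: all $m$ placements of $i^\star$ yield \emph{exactly the same} $G'$.

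The adversary argument then runs as follows. Probing any decoy $g_j$ ($j\neq i^\star$) reveals only its neighbour $b$, already observed, so it yields zero newly-observed nodes \emph{and zero new information}; hence adaptivity confers no advantage, as the outcome of every decoy probe is identical and independent of $i^\star$. For a deterministic $A$ the sequence of at most $k$ probes is therefore fixed in advance (as long as it has not hit the reward), so the adversary sets $i^\star$ to any index among $\{1,\dots,m\}$ that $A$ does not probe; since $m\ge k+1>k$ such an index exists. Then $A$ discovers none of $W$, while the full-knowledge optimum probes $g_{i^\star}$ and observes all of $W$, giving $\mathrm{ALG}=0$ and $\mathrm{OPT}\ge N$. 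For a randomized $A$, the expected number of gray nodes probed is at most $k$, so $\sum_{i=1}^{m}\Pr[g_i\text{ is probed}]\le k$ and by averaging some $i^\star$ satisfies $\Pr[g_{i^\star}\text{ is probed}]\le k/m$; placing the reward there yields $\mathbb{E}[\mathrm{ALG}]\le (k/m)N$ against $\mathrm{OPT}\ge N$, so $\mathrm{OPT}/\mathbb{E}[\mathrm{ALG}]\ge m/k$.

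The subtle point — and the step I would guard most carefully — is the claim that probing a decoy is \emph{informationless}: the gadget must be built so that the revealed neighbourhood of any $g_j$ is identical regardless of $i^\star$, which is what legitimises treating the deterministic probe sequence as predetermined and collapses the adaptive setting to a non-adaptive one (if decoys leaked any signal, an adaptive $A$ could localize $i^\star$). Everything else is bookkeeping: in the deterministic case $\mathrm{ALG}=0$ already forces an infinite ratio, and for the randomized bound letting $m\to\infty$ drives $\mathrm{OPT}/\mathbb{E}[\mathrm{ALG}]\ge m/k$ past any prescribed finite $\alpha$ (taking $N\to\infty$ if one prefers to keep the gain strictly positive). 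Since $\alpha$ was arbitrary and the argument nowhere invokes the algorithm's running time, no finite approximation factor is achievable against the fully-informed optimum, establishing Theorem~\ref{theo:hardness}.
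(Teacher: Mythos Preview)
Your proof is correct and follows essentially the same construction as the paper: a single black root with many symmetric gray leaves, one of which secretly connects to a large hidden component, and an adversarial placement argument for both deterministic and randomized algorithms. Your treatment is in fact tighter than the paper's---you make explicit the ``decoy probes leak no information'' step that justifies collapsing the adaptive case, and your averaging bound for the randomized case is more rigorous than the paper's informal ``essentially selects at random'' claim---but the gadget and the overall line of reasoning are identical.
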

To see this, we construct classes of instances of the problems that there is no approximation algorithm with finite factor.
	
	\begin{figure}[!ht]
		\centering
		\vspace{-0.05in}
		\includegraphics[width=0.34\linewidth]{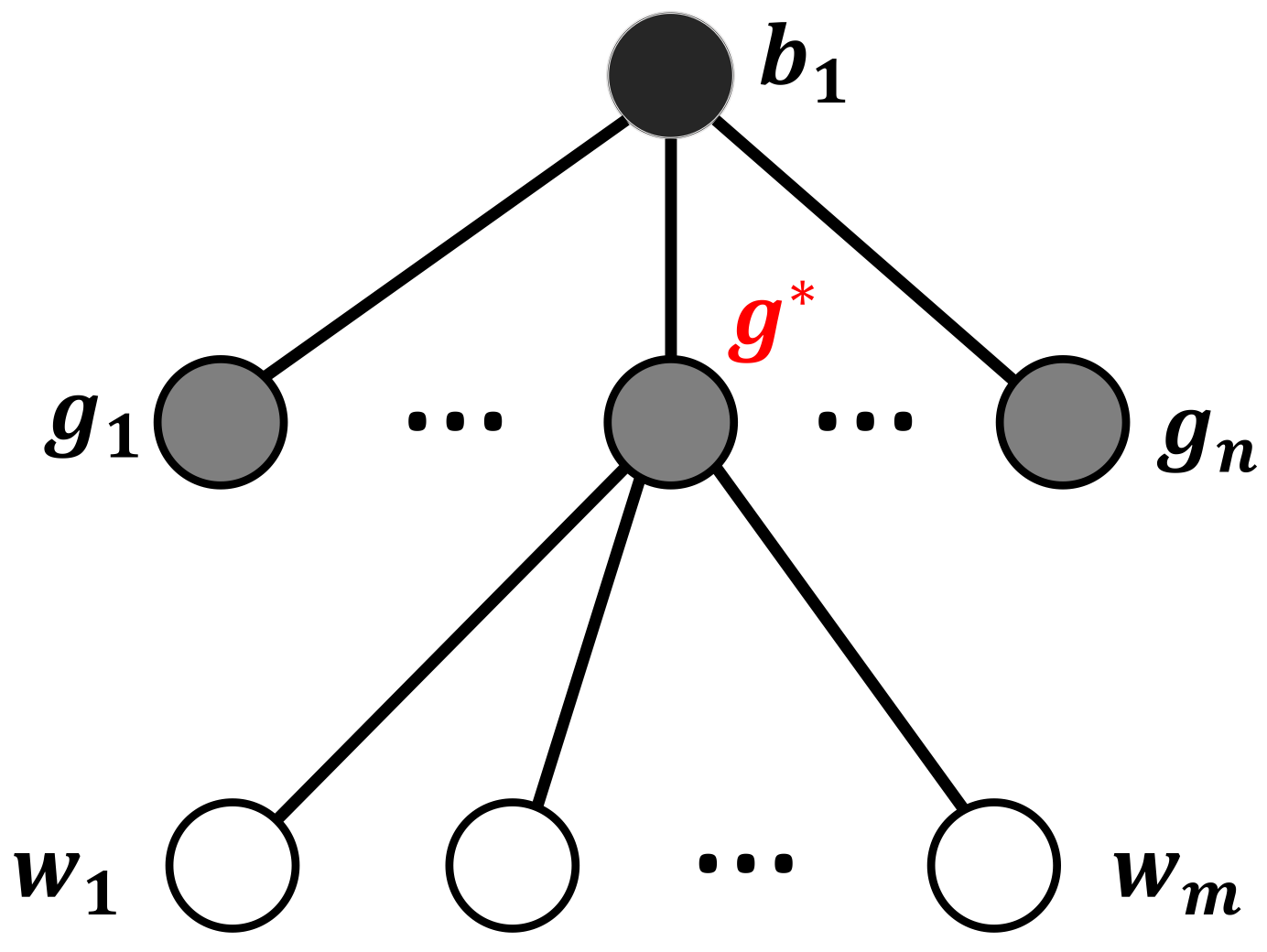}
		
        \vspace{-0.05in}
		\caption{Hardness illustration.}
		\label{fig:hard_nodegree}
        \vspace{-0.1in}
	\end{figure}
	
	We construct a class of instances of the probing problems as illustrated in Figure~\ref{fig:hard_nodegree}. Each instance in this class has: a single fully probed node in black $b_1$, $n$ observed nodes in gray each of which has an edge from $b_1$ and one of the observed nodes, namely $g^*$ varying between different instances, having $m$ connections to $m$ unknown nodes in white. Thus, the partially observed graph contains $n+1$ nodes, one fully probed and $n$ observed nodes which are selectable, while the underlying graph has in total of $n+m+1$ nodes. Each instance of the family has a different $g^*$ that $m$ unknown nodes are connected to. We now prove that in this class, no algorithm can give a finite approximate solution for the two problem.
	
	First, we observe that for any $k \geq 1$, the optimal solution which probes the nodes with connections to unknown nodes has the optimal value of $m$ newly explore nodes, denoted by $OPT = m$. Since any algorithm will not be aware of the number of connections that each gray node has, it does not know that $g^*$ leads to $m$ unobserved nodes. Thus, the chance that an algorithm selects $g^*$ is small and thus, can perform arbitrarily bad. Our complete proof is presented in the supplementary material.

\section{Learning the Best Probing Strategy}
\label{sec:algorithm}

Due to the hardness of \AGPM{} that no polynomial time algorithm with any finite approximation factor can be devised unless $P = NP$, an efficient algorithm that provides good guarantee on the solution quality in general case is unlikely to be achievable. This section proposes our machine learning based framework to tackle the \AGPM{} problem. Our approach considers the cases that in addition to the probed network, there is a reference network with similar characteristics and can be used to derive a good strategy. 

Designing such a machine learning framework is challenging due to three main reasons: \textbf{1)} what should be selected as learning samples, e.g., (incomplete) subgraphs or (gray) nodes and how to generate them? (Subsec.~\ref{subsec:build_data}); \textbf{2)} what features from incomplete subnetwork are useful for learning? (Subsec.~\ref{subsec:features}) and \textbf{3)} how to assign labels to  learning samples to indicate the benefit of selecting that node in long-term, i.e., to account for future probes?(Subsec.~\ref{subsec:features}).

\begin{figure}[!ht]
	\vspace{-0.1in}
	\centering
	\includegraphics[width=0.6\linewidth]{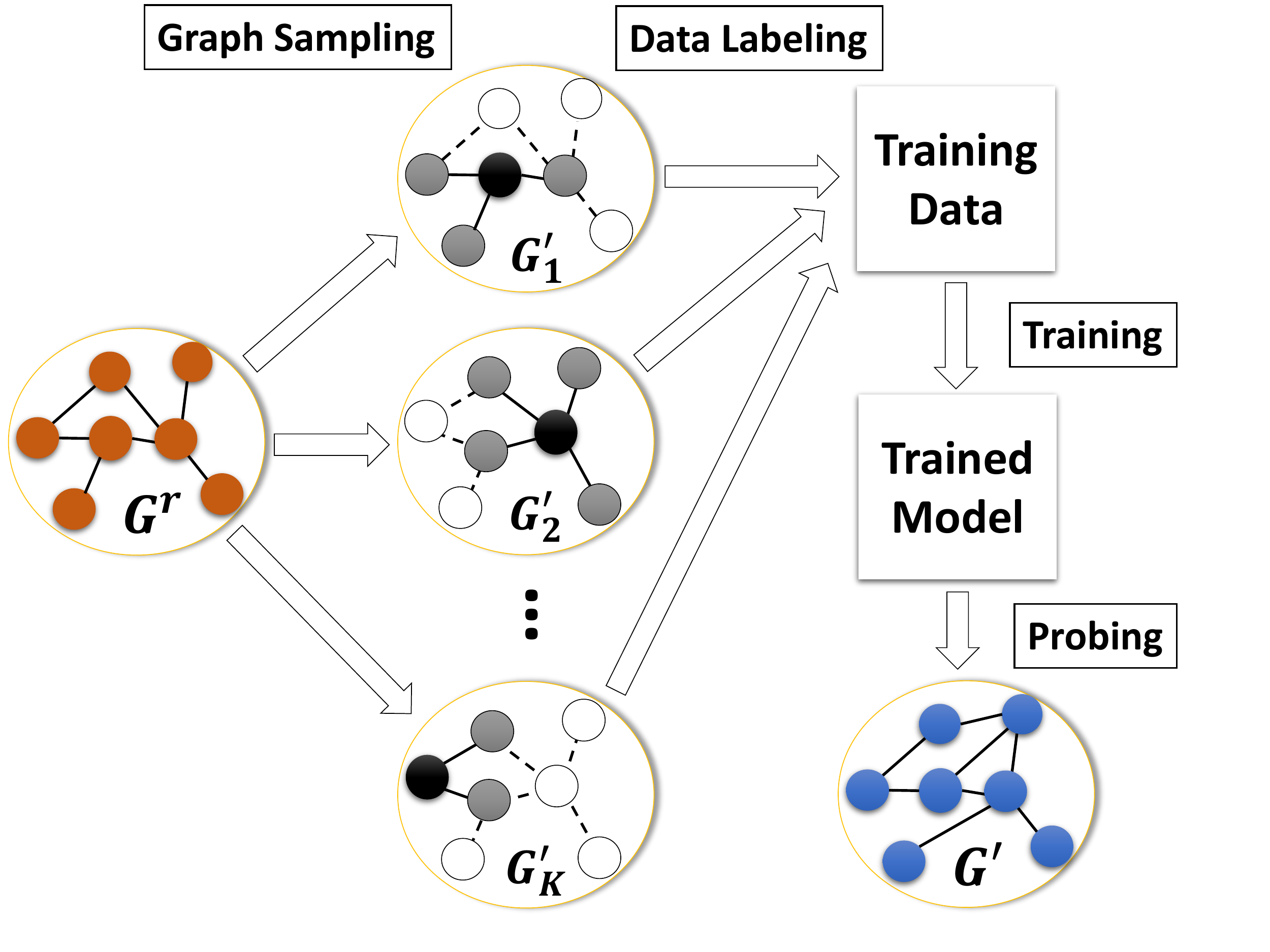}
	\vspace{-0.1in}
	\caption{Learning Framework}
	\label{fig:linear_model}
	\vspace{-0.15in}
\end{figure}
\textbf{Overview.} The general framework, which is depicted in Figure~\ref{fig:linear_model}, contains four steps: \textbf{1)} Graph sampling which generates many subnetworks from $G^r$ where each subnetwork is a sampled incomplete network with black, gray and white nodes; each candidate gray node in each sampled subnetwork creates a data point in our training data; \textbf{2)} Data labeling which labels each gray node with their long-term probing benefit; \textbf{3)} Training a model to learn the probing benefit of nodes from the features and \textbf{4)} Probing the targeted network guided by the trained machine learning model.

\subsection{Building Training Dataset.}
\label{subsec:build_data}
Given the reference network $G^r = (V^r, E^r)$, where $V^r$ is the set of $n$ nodes and $E^r$ is set of $m$ edges. Let $\mathcal{G} = \{G'_{1}, G'_{2}, ...G'_{K}\}$ be a collection of subnetwork sampled from $G^r$. The size of sampled subgraph $G'_i$ is randomly drawn between $0.5\%$ to $10\%$ of the reference graph $G^r$ following a power-law distribution. Given a subnetwork size, the sample can be generated using different mechanisms, e.g., Breadth-First-Search, Depth-First-Search or Random Walk \cite{Maiya10}.
We use $\mathcal{G}$ to construct a training data where each data point is a feature vector representing a candidate gray node. For each sample $G'_{i}, (1 \le i \le K)$, in $\mathcal{G}$, let $V_{G'_i}^p$ be the set of gray nodes in $G'_i$, we compute all the features for each node $u \in V_{G'_i}^p$ to form a data point. As such, each sample $G'_{i}$ creates $|V_{G'_i}^p|$ training data points. For assigning the label for each data point, we use our proposed algorithm \TADA{}, the heuristic improvement or the ILP algorithm (presented in Subsec.~\ref{subsec:tada}).

\subsection{Features for Learning.}
\label{subsec:features}

We select a rich set of intrinsic node features that only depend on the incomplete subnetwork and will be embedded in our learning model. Table~\ref{tbl:node_factors} shows a complete list of node features that we use in our machine learning model.

\renewcommand{\arraystretch}{1.2}
\setlength\tabcolsep{4pt}
\begin{table}[hbt]\scriptsize
	\vspace{-0.1in}
	\centering
	\caption{Set of features for learning}
   \vspace{-0.1in}
	\begin{tabular}{p{1cm}p{6.5cm}}
		\addlinespace
		\toprule
		\textbf{Factor}  & \textbf{Description}  \\
		\midrule
		$BC$ & Betweenness centrality score \cite{Newman10} of $u$ in $G'$ \\
		$CC$ & Closeness centrality score \cite{Newman10} of $u$ in $G'$ \\
		$EIG$ & Eigenvalue centrality score \cite{Newman10} of $u$ in $G'$\\	
		$PR$ & Pagerank centrality score \cite{Newman10} of $u$ in $G'$ \\
		$Katz$ & Katz centrality score \cite{Newman10} of $u$ in $G'$ \\
		$CLC$ & Clustering coefficient score of $u$ in $G'$ \\
		$DEG$ & Degree of $u$ in $G'$ \\		
		$BNum$ & Number of black nodes in $G'$ \\
		$GNum$ & Number of gray nodes in $G'$ \\
		$BDeg$ & Total degree of black nodes in $G'$ \\
		$BEdg$ & Number of edges between black nodes in $G'$ \\
		\bottomrule
	\end{tabular}
	\label{tbl:node_factors}
    \vspace{-0.2in}
\end{table}

\subsection{An $\frac{1}{r+1}$-Approximation Algorithm for \TGPM{}.}
\label{subsec:tada}

We first propose an $\frac{1}{r+1}$-optimal strategy for \TGPM{} to probe a sampled subnetwork of the reference network, where $r$, called \textit{radius}, is the largest distance from a node in the optimal solution to the initially observed network. This algorithm assigns labels for the training data.

An intuitive strategy, called Naive Greedy, is to select node with highest number of connections to unseen nodes. Unfortunately, this strategy can be shown to perform arbitrarily bad by a simple example. The example includes a fully probed node having a connection to a degree-2 node, which is a bridge to a huge component which is not reachable from the other nodes, and many other connections to higher degree nodes. Thus, the Naive Greedy will not select the degree-2 node and never reach the huge component.



Our algorithm is inspired by a recent theoretical result for solving the \textit{Connected Maximum Coverage} (\CMC{}) problem in \cite{Vandin11}. The \CMC{} assumes a set of elements $\mathcal{I}$, a graph $G = (V,E)$ and a budget $k$. Each node $v \in V$ associates with a subset $P_v \subseteq \mathcal{I}$. The problem asks for a \textit{connected} subgraph $g$ with at most $k$ nodes of $G$ that maximizes $\cup_{v \in g} P_v$. The proposed algorithm in \cite{Vandin11} sequentially selects nodes with highest ratio of newly covered nodes to the length of the shortest path from observed nodes and is proved to obtain an $\frac{e-1}{(2e-1)r}$-approximation factor.
\vspace{-0.1in}
\setlength{\textfloatsep}{3pt}
\begin{algorithm} \small
	\caption{\small \TADA{} Approximation Algorithm}
	\label{alg:agpm}
	\textbf{Input}: The reference network $G = (V,E)$, a sampled subnetwork $G' = (V',E')$ and a budget $k$. \\
	\textbf{Output}: Augmented graph of $G'$.
	\begin{algorithmic}[1]
		\State Collapse all fully observed nodes to a single root node $R$
		\State $i = 0$
		\While{$i < k$}
		\State $v_{max} \leftarrow \max_{v \in V \backslash V^f, |P_{V^f}(v)| \leq k-i} \frac{|O(v) \backslash V'|}{|P_{V^f}(v)|}$
		\State Probe all the nodes $v \in P_{V^f}(v_{max})$
		\State Update $V', V^f, V^p, V^u$ and $E'$ accordingly
		\State $i = i + |P_{V^f}(v_{max})|$
		\EndWhile
		\State \textbf{return} $G' = (V', E')$
	\end{algorithmic}
\end{algorithm}
\vspace{-0.1in}

Each node in a network can be viewed as associated with a set of connected nodes. We desire to select $k$ connected nodes to maximize the union of $k$ associated sets. However, different from \CMC{} in which any connected subgraph is a feasible solution, \TGPM{} requires the $k$ selected nodes to be connected from the fully observed nodes $V^f$. Thus, we analogously put another constraint of connectivity to a fixed set of nodes on the returned solution and that adds a layer of difficulty. Interestingly, we show that rooting from observed nodes and greedily adding nodes in the same manner as \cite{Vandin11} gives an $\frac{1}{r+1}$-approximate solution. Additionally, our analysis leads to a better approximation result for \CMC{} since $\frac{e-1}{(2e-1)r} < \frac{1}{2.58\cdot r} < \frac{1}{r+1}$.

Let denote $O(v)$ to be the set of nodes that $v$ is \textit{connected to}, i.e., $ O(v)= \{u | (v,u) \in E \}$ and $P_{V^f}(v)$ be the set of nodes on the shortest path from nodes in $V^f$ to $v$. For a set of nodes $S$, we call $f(S)$ the number of newly discovered nodes by probing $S$. Hence, $f(S)$ is our objective function. Our approximation algorithm, namely \textit{Topology-aware Adaptive Probing} (\TADA{}), is described in Alg.~\ref{alg:agpm}.

The algorithm starts by collapsing all fully observed nodes in $G'$ into a single root node $R$ which serves as the starting point. It iterates till all $k$ alloted budget have been exhausted into selecting nodes (Line~3). In each iteration, it selects a node $v_{max} \in V \backslash V^f$ within distance $k-i$ having maximum ratio of the number of unobserved nodes $|O(v)\backslash V'|$ to the length of shortest path from nodes in $V^f$ to $v$ (Line~4). Afterwards, all the nodes on the shortest path from $V^f$ to $v_{max}$ are probed (Line~5) and the incomplete graph is updated accordingly (Line~6).

%
%

The approximation guarantee of \TADA{} is stated in the following theorem.

\begin{Theorem}
	\TADA{} returns an $\frac{1}{r+1}$-approximate probing strategy for \TGPM{} problem where $r$ is the radius of the optimal solution.
\end{Theorem}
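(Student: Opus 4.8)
The plan is to adapt and tighten the ratio-greedy analysis of \cite{Vandin11} to the rooted setting, replacing its $1-1/e$-type bookkeeping with a single amortized charging argument that squeezes out the sharp constant $\frac{1}{r+1}$. First I would pass to the collapsed instance: contract the fully observed set $V^f$ to the root $R$, so that a feasible strategy is exactly a connected substructure through $R$ with at most $k$ further probed nodes; let $S^\star$ be an optimum with $OPT = f(S^\star)$, and let $r = \max_{u \in S^\star} d_G(V^f,u)$ be its radius, noting $r \le k$ since $S^\star$ is connected through $V^f$. Two elementary monotonicity facts underpin everything: $f$ (hence the running coverage $F_j$ after the $j$-th probed path of \TADA{}) never decreases, and the distance $d_G(V^f,\cdot)$ from the growing observed set never increases.

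The heart of the proof is a per-iteration progress lemma: at the start of iteration $j$, if $F_j < OPT$ and at least $r$ units of budget remain, then the greedily chosen ratio obeys $\rho_j := \frac{|O(v_{max}) \setminus V'|}{|P_{V^f}(v_{max})|} \ge \frac{OPT - F_j}{rk}$. I would prove this by the mediant inequality $\max_i \frac{a_i}{b_i} \ge \frac{\sum_i a_i}{\sum_i b_i}$: each of the at least $OPT - F_j$ still-undiscovered nodes of $S^\star$'s discovery set is adjacent to some $u \in S^\star$; every such $u$ is still unprobed (else its neighbor would be observed) and currently reachable within the remaining budget (its distance is at most $r$), and $\sum_{u \in S^\star} |O(u) \setminus V'| \ge OPT - F_j$ while $\sum_{u \in S^\star} d_G(V^f,u) \le r|S^\star| \le rk$, so some admissible $u$, and hence the maximizer $v_{max}$, attains ratio at least the quotient of these two sums.

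Next I would convert this into a global bound by charging \emph{budget units} rather than iterations. Imagine the budget spent one unit at a time; path $j$ costs $c_j = |P_{V^f}(v_{max})|$ units, and each of its units is credited weight $\rho_j$, so path $j$ carries total weight $c_j\rho_j = |O(v_{max}) \setminus V'| \le \Delta_j$, where $\Delta_j$ counts the nodes that path actually reveals; summing, the total credited weight is at most $\sum_j \Delta_j = F$, the final coverage. Conversely, monotonicity gives $F_j \le F$, so by the lemma every unit spent while at least $r$ budget remained carries weight at least $\frac{OPT - F}{rk}$. If every one of the $k$ units qualified we would get total weight $\ge \frac{OPT-F}{r}$, hence $F \ge \frac{OPT - F}{r}$, i.e. $(r+1)F \ge OPT$, which is the claimed ratio; and if \TADA{} ever halts with budget to spare, then every node reachable from $V^f$ has already been probed, forcing $F \ge OPT$ outright. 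Stripping out the root $R$, the same reasoning reproves \CMC{} with factor $\frac{1}{r+1}$, improving the $\frac{e-1}{(2e-1)r}$ bound of \cite{Vandin11}.

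The main obstacle is exactly the qualifier ``while at least $r$ budget remained'': the feasibility rule $|P_{V^f}(v)| \le k-i$ means that during the last fewer-than-$r$ units a deep node of $S^\star$ may no longer be affordable, so the lemma can fail on that short tail. Closing this gap is where I expect the real work to lie — either by arguing that those trailing units contribute only a lower-order amount, or by rerunning the estimate against the truncation of $S^\star$ to the radius still affordable at that moment and bounding the discoveries thereby forfeited. Alongside it sit the routine checks: that collapsing $V^f$ to $R$ preserves both feasibility and the optimum, and that the while loop never overshoots the budget, so that ``$k$ units'' is exact whenever \TADA{} does not get stuck.
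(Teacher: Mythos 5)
Your overall strategy---root the greedy at $V^f$ and run an amortized ratio-charging argument---is the right family of argument, but as written the proof does not close, and you flag the failure point yourself. The problem is not a routine tail estimate. Your progress lemma bounds each qualifying budget unit's weight by $\frac{OPT-F}{rk}$, the denominator $rk$ coming from taking the mediant over \emph{all} of $S^\star$ at once; consequently you need every one of the $k$ units to qualify in order to recover $F \geq \frac{OPT-F}{r}$. But the lemma only applies to units spent while at least $r$ budget remains (otherwise a deep node of $S^\star$ is inadmissible under the constraint $|P_{V^f}(v)| \leq k-i$), so what your argument actually yields is $F \geq \frac{k-r+1}{rk}\,(OPT-F)$, which is strictly weaker than the claimed $\frac{1}{r+1}$ factor and much weaker when $k$ is comparable to $r$. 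Neither of your two proposed repairs is carried out, and neither is obviously available: the trailing units need not be lower-order, and truncating $S^\star$ to the still-affordable radius can forfeit almost all of $OPT$, since the deep nodes of $S^\star$ may be exactly the ones responsible for its coverage. So there is a genuine gap, and it is located exactly where you said it would be.

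The paper's proof takes a per-node rather than aggregate charge, which is what removes both the $k$ from the denominator and the endgame case split. In iteration $i$ it uses the greedy rule pointwise, $\delta_{S^{i-1}}(v^i_{max}) \geq \delta_{S^{i-1}}(v^*)$ for each $v^* \in S^*$, multiplies by the path length $|P_{S^{i-1}}(v^i_{max})|$ to get $\Delta_{S^{i-1}}(v^i_{max}) \geq \sum_{j=|S^{i-1}|}^{|S^i|} \delta_{S^{i-1}}(v^*_j)$, i.e., it charges exactly as many optimal nodes to an iteration as budget units that iteration consumes, so that over all iterations each $v^*_j$ is charged exactly once. Each charged term is then lower-bounded by $\delta_{S^{i-1}}(v^*_j) \geq \Delta_{\hat S}(v^*_j)/r$ (marginals only shrink as the observed set grows, and every optimal node stays within distance $r$ of the growing $V^f$), and summing gives $f(\hat S) \geq \frac{1}{r}\sum_{j=1}^{k} \Delta_{\hat S}(v^*_j) \geq \frac{OPT - f(\hat S)}{r}$, hence $f(\hat S) \geq \frac{OPT}{r+1}$, with no requirement that all units ``qualify.'' (The admissibility of optimal nodes under the residual-budget constraint---the very issue you worry about---is used silently in the paper's inequality as well, but the per-node charging keeps the constant $\frac{1}{r+1}$ intact; your aggregate mediant cannot. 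To finish along your lines you would have to replace the mediant lemma by this per-node comparison of ratios, at which point you have essentially reconstructed the paper's argument.)
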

\begin{proof}
	Let denote the solution returned by \TADA{} $\hat S$ and an optimal solution $S^* = \{v^*_1,v^*_2,\dots,v^*_k\}$ which results in the maximum number of newly discovered nodes, denoted by $OPT$. We assume that both $\hat S$ and $S^*$ contain exactly $k$ nodes since adding more nodes never gives worse solutions. We call the number of additional unobserved nodes discovered by $S'$ in addition to that of $S$, denoted by $\Delta_{S}(S')$, the \textit{marginal benefit} of $S'$ with respect to $S$. For a single node $v$, $\Delta_{S}(v) = \Delta_{S}(\{v\})$. In addition, the ratio of the marginal benefit to the distance from the set $S$ to a node $v$, called \textit{benefit ratio}, is denoted by $\delta_{S}(v) = \frac{\Delta_{S}(v)}{|P_{S}(v)|}$.
	
	Since in each iteration of \TADA{}, we add all the nodes along the shortest path connecting $V^f$ to $v_{max}$, we assume that $t \leq k$ iterations are performed. In iteration $i \geq 0$, node $v^i_{max}$ is selected to probe and, up to that iteration, $S^i$ nodes have been selected so far.
	
	Due to the greedy selection, we have, $\forall i \geq 1, \forall \hat v \in S^i\backslash S^{i-1}, \forall v^* \in S^*$,
	\vspace{-0.1in}
	\begin{align}
	\delta_{S^{i-1}}(v^i_{max}) \geq \delta_{S^{i-1}}(v^*)
	\end{align}
	\vspace{-0.2in}
	
	\noindent Thus, we obtain,
	\vspace{-0.2in}
	\begin{align}
	|P_{S^{i-1}}(v^i_{max})|\cdot \delta_{S_{i-1}}(v^i_{max}) \geq \sum_{j = |S^{i-1}|}^{|S^{i}|} \delta_{S^{i-1}}(v^*_j)
	\end{align}
	\vspace{-0.25in}
	
	\noindent or, equivalently,
	\vspace{-0.2in}
	\begin{align}
	\label{eq:iter}
	\Delta_{S^{i-1}(v^i_{max})} \geq \sum_{j = |S^{i-1}|}^{|S^{i}|} \delta_{S^{i-1}}(v^*_j)
	\end{align}
	\vspace{-0.15in}
	
	Adding Eq.~\ref{eq:iter} over all iterations gives,
	\vspace{-0.1in}
	\begin{align}
	\label{eq:iter_2}
	\sum_{i = 1}^{t}\Delta_{S^{i-1}(v^i_{max})} \geq \sum_{i = 1}^{t}\sum_{j = |S^{i-1}|}^{|S^{i}|} \delta_{S^{i-1}}(v^*_j)
	\end{align}
	\vspace{-0.15in}
	
	\noindent The left hand side is actually $f(\hat S)$ which is sum of marginal benefit over all iterations. The other is the sum of benefit ratios over all the nodes in the optimal solution $S^*$ with respect to sets $S^{i-1}$, where $0 \leq i \leq k-1$, which are subsets of $\hat S$. Thus, $\forall i,j$,
	\vspace{-0.1in}
	\begin{align}
	\delta_{S^{i-1}}(v^*_j) \geq \delta_{\hat S}(v^*_j) \geq \frac{\Delta_{\hat S}(v^*_j)}{|P_{\hat S}(v^*_j)|} \geq \frac{\Delta_{\hat S}(v^*_j)}{r}
	\end{align}
	\vspace{-0.15in}
	
	\noindent Then, the right hand side is,
	\vspace{-0.1in}
	\begin{align}
	\label{eq:theo1_bound}
	\sum_{i = 1}^{t}\sum_{j = |S^{i-1}|}^{|S^{i}|} \delta_{S^{i-1}}(v^*_j) \geq \sum_{i = 1}^{t}\sum_{j = |S^{i-1}|}^{|S^{i}|} \frac{\Delta_{\hat S}(v^*_j)}{r}
	\end{align}
	\vspace{-0.1in}
	
	\noindent Notice that $\Delta_{\hat S}(v^*_j)$ is the marginal benefit of node $v^*_j$ with respect to set $\hat S$, hence, the summation itself becomes,
	\vspace{-0.15in}
	\begin{align}
	\sum_{i = 1}^{t}\sum_{j = |S^{i-1}|}^{|S^{i}|} \Delta_{\hat S}(v^*_j) = \sum_{j = 1}^{k} \Delta_{\hat S}(v^*_j) & = f(S^*) - f(\hat S) \nonumber \\
	& = OPT - f(\hat S) \nonumber
	\end{align}
	\vspace{-0.3in}
	
	\noindent Thus, Eq.~\ref{eq:iter_2} is reduced to,
	\vspace{-0.1in}
	\begin{align}
	f(\hat S) \geq \frac{OPT - f(\hat S)}{r}
	\end{align}
	\vspace{-0.2in}
	
	\noindent Rearranging the above equation, we get,
	\vspace{-0.05in}
	\begin{align}
	f(\hat S) \geq \frac{OPT}{r+1}
	\end{align}
	\vspace{-0.2in}
	
	\noindent which completes our proof.
\end{proof}

    \vspace{-0.05in}
\subsubsection{Improved Heuristic.}
Despite the $\frac{1}{r+1}$-approximation guarantee, \TADA{} algorithm only considers the gain of ending node in a shortest path and completely ignores the on-the-way benefit. That is the newly observed nodes discovered when probing the intermediate nodes on the shortest paths are neglected in making decisions. Thus, we can improve \TADA{} by counting all the newly observed nodes \textit{along the connecting paths} which are not necessarily the shortest paths and the selection criteria of taking the path with largest benefit ratio is applied. Since the selected path of nodes has the benefit ratio of at least as high as that of considering only the ending nodes, the $\frac{1}{r+1}$-approximation factor is preserved.

Following that idea, we propose a Dijkstra based algorithm to select the path with maximum benefit ratio. We assign for each node $u$ a benefit ratio $\delta(u)$, a distance measure $d(u)$ and a benefit value $\Delta(u)$. Our algorithm iteratively selects a node $u$ with highest benefit ratio and propagates the distance and benefit to its neighbors: if neighbor node $v$ observes that by going through $v$, $u$'s benefit ratio gets higher, $v$ will update its variables to have $u$ as the direct predecessor. Our algorithm finds the path with highest benefit ratio.

Note that extra care needs to be taken in our algorithm to avoid having \textit{loops} in the computed paths. The loops normally do not appear since closing a loop only increases the distance by one while having the same benefit of the path. However, in extreme cases where a path passes through a node with exceptionally high number of connections to unobserved nodes, loops may happen. To void having loops, we check whether updating a new path will close a loop by storing the predecessor of each node and traverse back until reaching a fully observed node.

    \vspace{-0.09in}
\subsubsection{Optimal ILP Algorithm.}
To study the optimal solution for our \TGPM{} problem when topology is available, we present our Integer Linear Programming formulation. Hence, We can use a convenient off-the-shelf solver, e.g., Cplex, Gurobi, to find an optimal solution. Unfortunately, Integer Linear Programming is not polynomially solvable and thus, extremely time-consuming.

In the prior step, we also collapse all fully probed nodes into a single node $r$ and have connections from $r$ to partially observed nodes. Assume that there are $n$ nodes including $r$, for each $u \in V$, we define $y_u \in \{0,1\}, \forall u \in V$ such that,
\vspace{-0.05in}
\begin{align}
y_u = \left\{ \begin{array}{ll}
1 & \text{ if node } u \text{ is observed,}\\
0 & \text{ otherwise}.
\end{array}
\right. \nonumber
\end{align}
\vspace{-0.15in}

\noindent Since at most $k$ nodes are selected adaptively, we can view the solution as a tree with at most $k$ layers. Thus, we define $x_{uj} \in \{0,1\}, \forall u \in V, j = 1..k$ such that,
\vspace{-0.05in}
\begin{align}
x_{uj} = \left \{ \begin{array}{ll}
1 & \text{ if node } u \text{ is selected at layer }j \text{ or earlier},\\
0 & \text{ otherwise}.
\end{array}
\right. \nonumber
\end{align}
\vspace{-0.15in}

\noindent The \TGPM{} problem selects at most $k$ nodes, i.e., $\sum_{u \in V} x_{uk} \leq k$ to maximize the number of newly observed nodes, i.e., $\sum_{u \in V} y_{u}$. A node is observed if at least one of its neighbors is selected meaning $y_u \leq \sum_{v \in N(u)} x_{vk}$ where $N(u)$ denotes the set of $u$'s neighbors. Since $r$ is the initially fully observed nodes, we have $x_{r0} = 1$. Furthermore, $u$ can be selected at layer $j$ if at least one of its neighbors has been probed and thus, $x_{uj} \leq \sum_{v \in N(u)} x_{v(j-1)}$.

Our formulation is summarized as follows,
\vspace{-0.05in}
\begin{align}
\label{eq:agpm_ip}
\max \quad & \sum_{u \in V} y_{u} - |N(r)| - 1
\end{align}
\vspace{-0.3in}
\begin{align}
\text{ s.t.}\quad \quad & x_{r0} = 1, x_{u0} = 0, \qquad u \in V, u \neq r \nonumber \\
& \sum_{u \in V} x_{uk} \leq k \nonumber \\
& y_u \leq \sum_{v \in N(u)} x_{vk}, \qquad \forall u \in V, \nonumber\\
& x_{uj} \leq x_{u(j+1)}, \qquad \forall u \in V, j = 0..k-1, \nonumber\\
& x_{uj} \leq x_{u(j-1)} + \sum_{v \in N(u)} x_{v(j-1)}, \text{ } \forall u \in V, j = 1..k, \nonumber\\
& x_{uj}, y_{u} \in \{0, 1\}, \forall u \in V, j = 0..k. \nonumber
\end{align}
\vspace{-0.2in}

From the solution of the above ILP program, we obtain the solution for our \TGPM{} instance by simply selecting nodes $u$ that $x_{uk} = 1$. Note that the layering scheme in our formulation guarantees both the connectivity of the returned solution and containing root node $r$ and thus, the returned solution is feasible and optimal.
\begin{Theorem}
	The solution of the ILP program in Eq.~\ref{eq:agpm_ip} infers the optimal solution of our \TGPM{}.
\end{Theorem}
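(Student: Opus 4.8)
The plan is to prove the statement by exhibiting a value-preserving correspondence between feasible probing strategies for \TGPM{} and feasible integral points $(x,y)$ of the program in Eq.~\ref{eq:agpm_ip}, so that optima transfer in both directions. First I would note that, because the topology of $G$ is fully known in \TGPM{}, probing is deterministic: an adaptive strategy is simply an ordered list of probes $u_1,\dots,u_\ell$ with $\ell\le k$ in which each $u_i$ is gray at the moment it is probed, i.e. $u_i$ is adjacent to $\{r\}\cup\{u_1,\dots,u_{i-1}\}$ (all fully observed nodes having been collapsed into $r$). Hence $S=\{r,u_1,\dots,u_\ell\}$ induces a connected subgraph of $G$ containing $r$, and the number of newly observed nodes equals the size of $\bigl(\bigcup_{v\in S}N(v)\bigr)\setminus(\{r\}\cup N(r))$, which depends only on the set $S$ and not on the order.

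For the strategy-to-ILP direction, given such an $S$ I would fix a spanning tree of the induced subgraph rooted at $r$ (for instance by choosing for each $u_i$ a parent in $\{r,u_1,\dots,u_{i-1}\}\cap N(u_i)$), let $\mathrm{depth}(u)$ denote the distance from $u$ to $r$ in that tree, and set $x_{uj}=1$ iff $u\in S$ and $\mathrm{depth}(u)\le j$, and $y_u=1$ iff $N(u)\cap S\neq\emptyset$. Then I would verify the constraints in turn: $x_{r0}=1$ and $x_{u0}=0$ for $u\neq r$ hold by construction; $x_{uj}\le x_{u(j+1)}$ is immediate; the layering constraint $x_{uj}\le x_{u(j-1)}+\sum_{v\in N(u)}x_{v(j-1)}$ holds because $\mathrm{depth}(u)=j>0$ implies the tree-parent of $u$ is a neighbour at depth $j-1$; $y_u\le\sum_{v\in N(u)}x_{vk}$ holds since $x_{vk}$ is the indicator of $v\in S$; and $\sum_u x_{uk}=|S|\le k$. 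Finally, since $r\in S$ we get $N(r)\subseteq\bigcup_{v\in S}N(v)$, and $r$ itself lies in that union whenever $S$ contains a neighbour of $r$ (true at any optimum, since probing nothing gives $0$ new nodes); hence $\sum_u y_u-|N(r)|-1$ equals exactly the number of newly observed nodes of the strategy.

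For the ILP-to-strategy direction, given a feasible $(x,y)$ I would set $S=\{u:x_{uk}=1\}$; then $r\in S$ (from $x_{r0}=1$ and monotonicity) and $|S|\le k$. The core claim is that $S$ induces a connected subgraph containing $r$, which I would prove by induction on $j$: whenever $x_{uj}=1$, there is a path from $u$ to $r$ lying entirely inside $\{v:x_{vj}=1\}$ and of length at most $j$. The base case $j=0$ is $u=r$. For the step, the layering constraint $x_{uj}\le x_{u(j-1)}+\sum_{v\in N(u)}x_{v(j-1)}$ forces either $x_{u(j-1)}=1$ (apply the hypothesis to $u$ at level $j-1$ and lift the resulting path to level $j$ via monotonicity) or $x_{v(j-1)}=1$ for some $v\in N(u)$ (prepend the edge $(u,v)$ to the path guaranteed for $v$). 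Taking $j=k$ gives connectivity. I would then probe $S\setminus\{r\}$ in breadth-first order along a spanning tree of $S$ rooted at $r$: each node is gray before it is probed, so this is a valid \TGPM{} strategy using $|S|-1\le k-1$ probes, and its set of newly observed nodes is again $\bigl(\bigcup_{v\in S}N(v)\bigr)\setminus(\{r\}\cup N(r))$, whose size is at least $\sum_u y_u-|N(r)|-1$, with equality once the $y_u$ are set to their largest feasible values — which any optimal ILP solution does. Combining the two directions, reading $S=\{u:x_{uk}=1\}$ off an optimal ILP solution yields an optimal \TGPM{} strategy.

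The hard part will be the connectivity argument in the ILP-to-strategy direction, namely confirming that the layered inequalities really encode ``$u$ is reachable from $r$ within $k$ hops through selected nodes'' and not merely a necessary consequence of that; the rest is routine bookkeeping — checking that $y$ can always be pushed to its maximal feasible value without breaking a constraint, and that the $-|N(r)|-1$ offset cancels exactly the initially observed set $\{r\}\cup N(r)$ so the two objective values coincide on the nose.
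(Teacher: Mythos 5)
Your proposal is correct and is, in substance, the argument the paper intends: the paper itself offers only a one-sentence justification (``the layering scheme guarantees both the connectivity of the returned solution and containing root node $r$''), and your two-direction, value-preserving correspondence --- in particular the induction on $j$ showing that $x_{uj}=1$ forces a path from $u$ to $r$ inside $\{v: x_{vj}=1\}$, and the check that the $-|N(r)|-1$ offset cancels the initially observed set $\{r\}\cup N(r)$ --- supplies exactly the details the paper omits. The only slip is the budget accounting in the strategy-to-ILP direction: since $x_{r0}=1$ and the monotonicity constraints force $x_{rk}=1$, your encoding of a strategy $u_1,\dots,u_\ell$ gives $\sum_{u\in V}x_{uk}=|S|=\ell+1$, so the claimed ``$\sum_u x_{uk}=|S|\le k$'' fails for strategies that use the full budget $\ell=k$; as literally written, the constraint $\sum_{u\in V}x_{uk}\le k$ admits only $k-1$ actual probes. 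This is really an artifact of the paper's formulation (the root should be excluded from the cardinality constraint, or the bound relaxed to $k+1$), but your proof should either adopt that reading explicitly or restrict the forward direction to $\ell\le k-1$; with that one-line amendment the equivalence and the transfer of optima go through exactly as you describe.
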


\begin{figure}[!ht]
	\centering
	\includegraphics[width=0.65\linewidth]{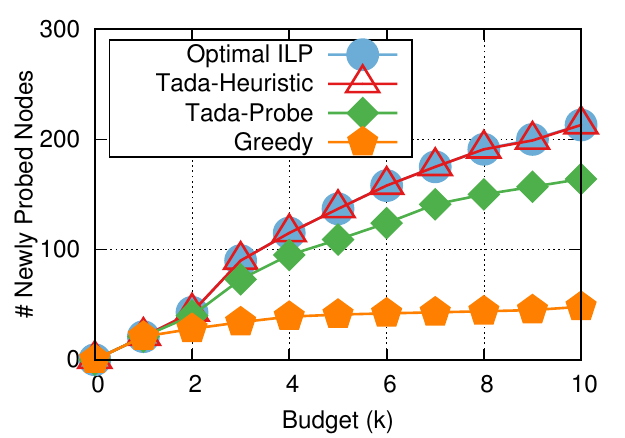}
	\vspace{-0.1in}
	\caption{Performance of different algorithms}
	\label{fig:compare_alg}
\end{figure}

    \vspace{-0.09in}
\subsubsection{Empirical Evaluation.}
Here, we compare the probing performance in terms of the number of newly probed nodes that different algorithms deliver on a Facebook ego network\footnote{http://snap.stanford.edu/data/egonets-Facebook.html} with 347 nodes and 5029 edges. The results are presented in Fig.~\ref{fig:compare_alg}. The figure shows that our Heuristic improvement very often meets the optimal performance of the ILP algorithm while Tada-Probe is just below the former two methods. However, the naive Greedy algorithm performs badly due to having no guarantee on solution quality.

\subsection{Training Models.}

We consider two classes of well-studied machine learning models. First, linear regression model is applied to learn a linear combination of features characterized by coefficients. These coefficients learn a linear representation of the dependence of labels on the features. The output of training phase of linear regression model is a function $f_{Lin}(.)$ which will be used to estimate the gain $w_{u}^o$ of probing a node $u$ in subgraph $G'$. It is noted that we learn $f_{Lin}(.)$ from sampled graph $G'_i$ of the reference graph $G^r$ and use $f_{Lin}(.)$ to probe the incomplete $G'$. 

Secondly, we consider logistic regression to our problem as follow: Let $w_u^o$, $w_v^o$  be the gain of probing the two nodes $u$ and $v$ respectively. Given a pair of nodes $<u,v>$ in $V_{G'_i}^p$, our logistic model $f_{Log(.)}$ learns to predict whether $w_u^o$ is larger than $w_v^o$. Thus, for each $G'_{i}$ in $\mathcal{G}$, we compute node features of each node $u \in V_{G'_i}^p$. We generate $\binom{|V_{G_{i}}^p|}{2}$ pairs of nodes for each subgraph $G'_{i}$ and then concatenate features of two nodes in a pair to form a single data point. Each data point $<u,v>$ is labeled by binary value ($1$ or $0$):
\vspace{-0.05in}
\begin{align}
	l = \left \{ \begin{array}{ll} 
	1 & \mbox{if $w_u^o \geq w_v^o$};\\
	0 & \mbox{if $w_u^o < w_v^o$}.\end{array} \right.
\end{align}

\section{Experiments}
\label{sec:exps}

In this section, we perform  experiments on real-world networks to evaluate performance of the proposed  methods.


\renewcommand{\arraystretch}{1.2}
\setlength\tabcolsep{2pt}
\begin{table}[hbt] \small
	\centering
	\caption{Statistics for the networks used in our experiments. ACC stands for Average Clustering Coefficient. Bold and underlined networks are used for training. }
	\vspace{-0.2in}
	\begin{tabular}{llllc}
		\addlinespace
		\toprule
		\textbf{Name}  & \textbf{Network Type} & \#\textbf{Node} & \#\textbf{Edges} & \textbf{ACC} \\ 
		\midrule 
		Roadnw-CA & Road & $21k$ & $21k$ & $7.10^{-5}$\\
		Roadnw-OL & Road & $6k$ & $7k$ & $0.01$\\
		\textbf{\underline{Roadnw-TG}}  & Road & \textbf{$18k$} & \textbf{$23k$} & \textbf{$0.018$}\\
		GnuTella04  & p2p (GnuTella) & $11k$ & $40k$ & $0.006$ \\
		GnuTella05 & p2p & $9k$ & $32k$ & $0.007$ \\
		\textbf{\underline{GnuTella09}} & p2p & \textbf{$8k$} & $26k$ & $0.009$\\
		Ca-GrQc & Collaboration (CA) & $5k$ & $14k$ & $0.529$\\
		Ca-HepPh & Collaboration & $12k$ & $118k$ & $0.611$\\
		Ca-HepTh & Collaboration & $10k$ & $26k$ & $0.471$\\
		Ca-CondMat & Collaboration & $23k$ & $93k$ & $0.633$\\
		\textbf{\underline{Ca-AstroPh}} & Collaboration & $18k$ & $198k$ & $0.630$\\
		\bottomrule
	\end{tabular}
	\label{tbl:dataset}
	\vspace{-0.1in}
\end{table}

\begin{figure*}[!ht]
	\centering
	\subfloat[GnuTella Network]{
		\includegraphics[width=0.3\linewidth]{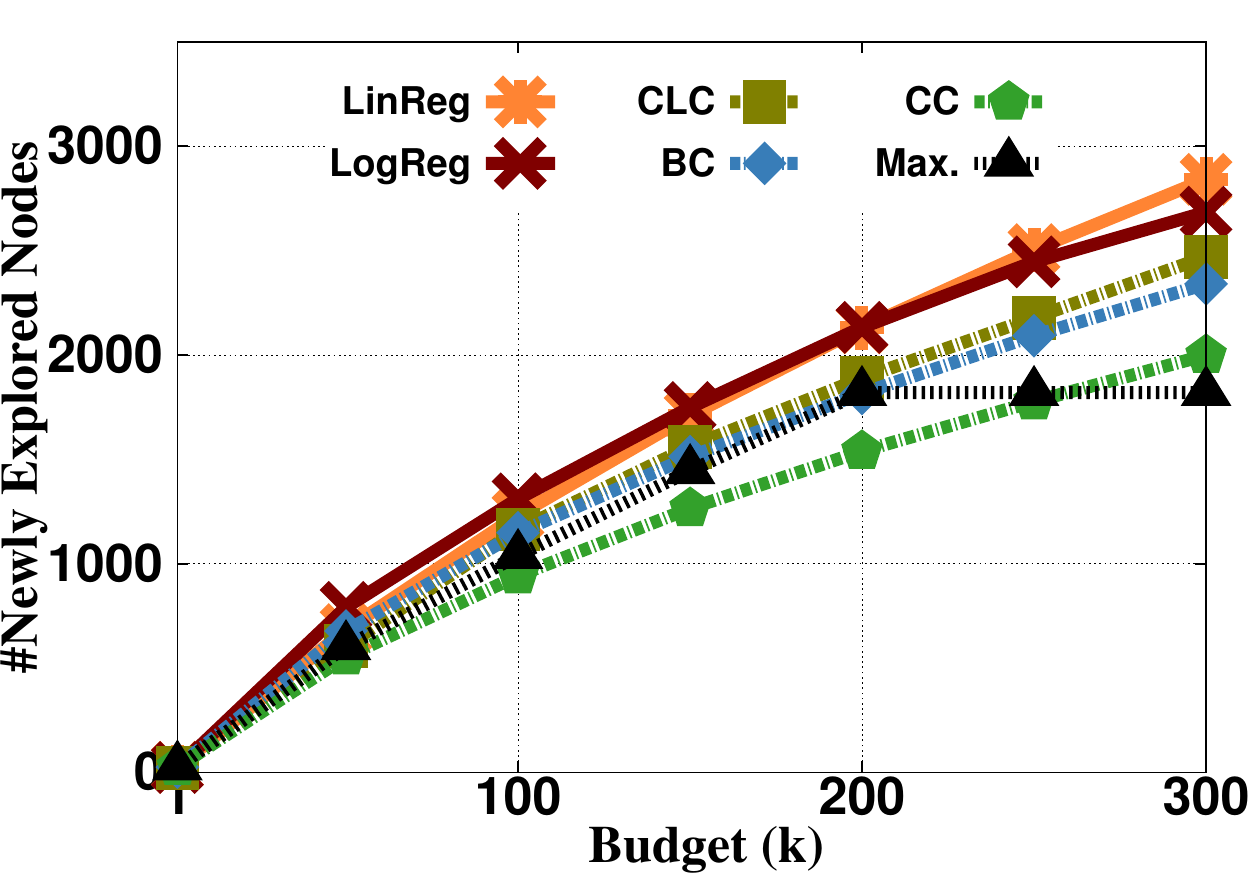}
	}
	\subfloat[Collaboration Network]{
		\includegraphics[width=0.3\linewidth]{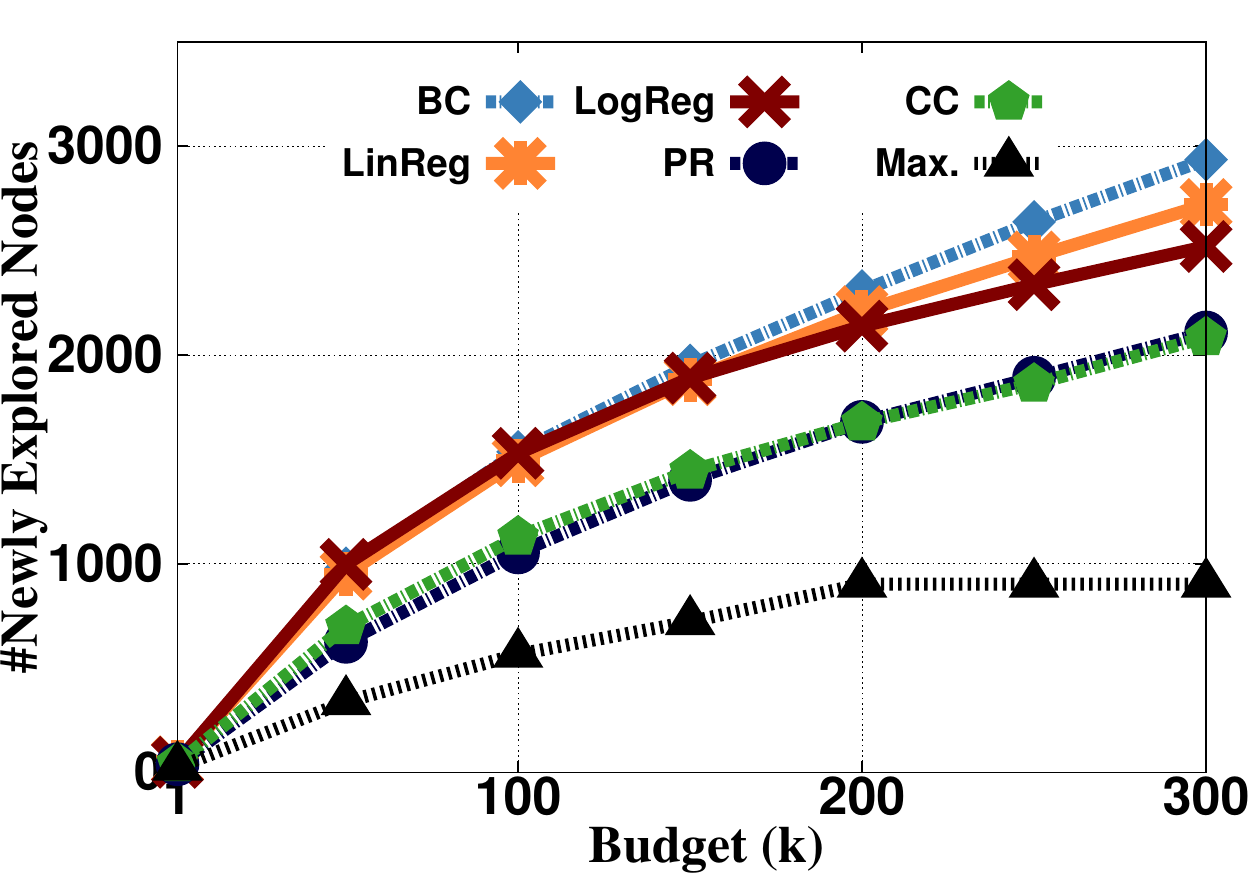}
	}
	\subfloat[Road Network]{
		\includegraphics[width=0.3\linewidth]{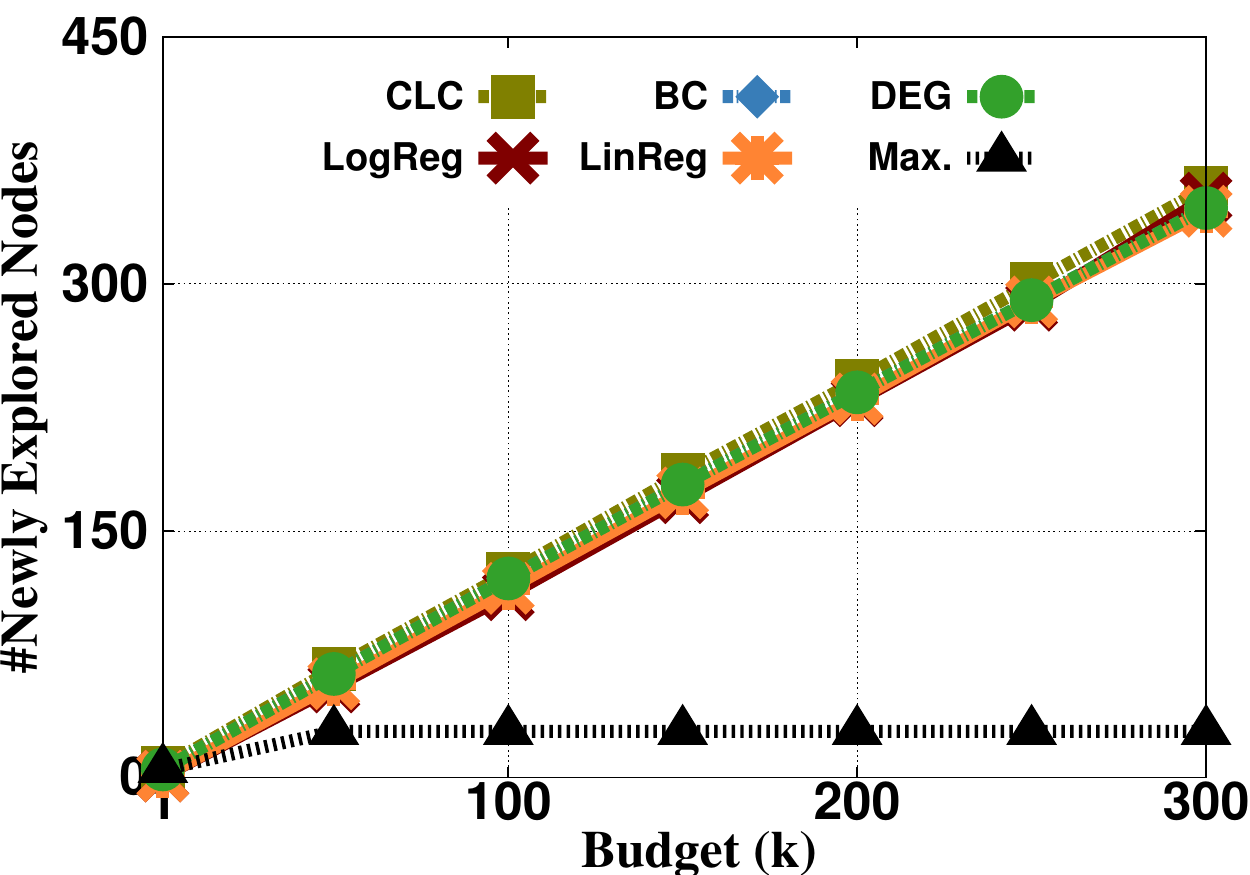}
	}
	\vspace{-0.1in}
	\caption{Performance Comparison of Base Case Machine Learning ($h=1$) with Metric-based and Heuristic Approaches.}
	\label{fig:set2}
	\vspace{-0.25in}
\end{figure*}

\begin{figure*}[!ht]
	\centering
	
	\subfloat[GnuTella Network]{
		\includegraphics[width=0.3\linewidth]{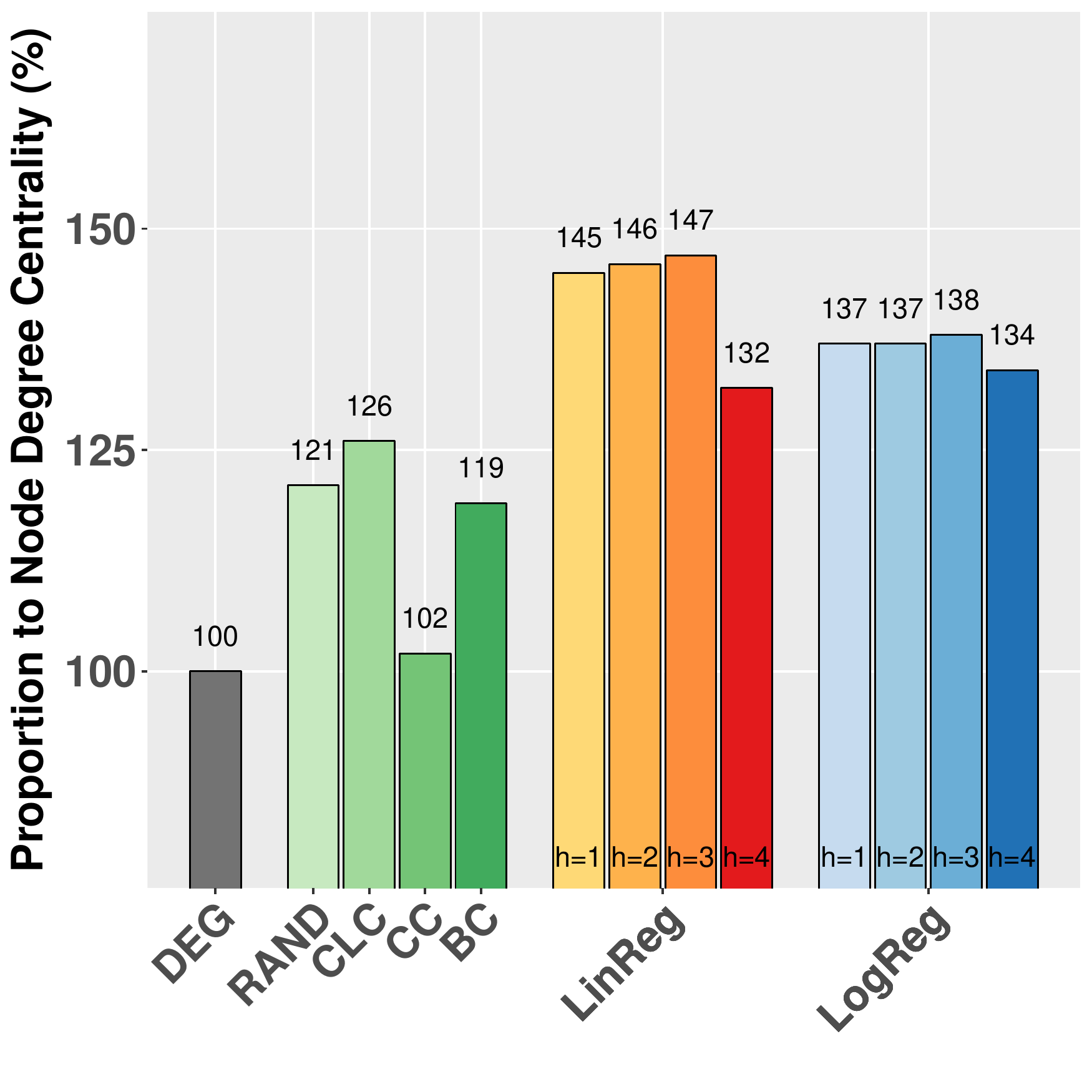}
	}
	\subfloat[Collaboration Network]{
		\includegraphics[width=0.3\linewidth]{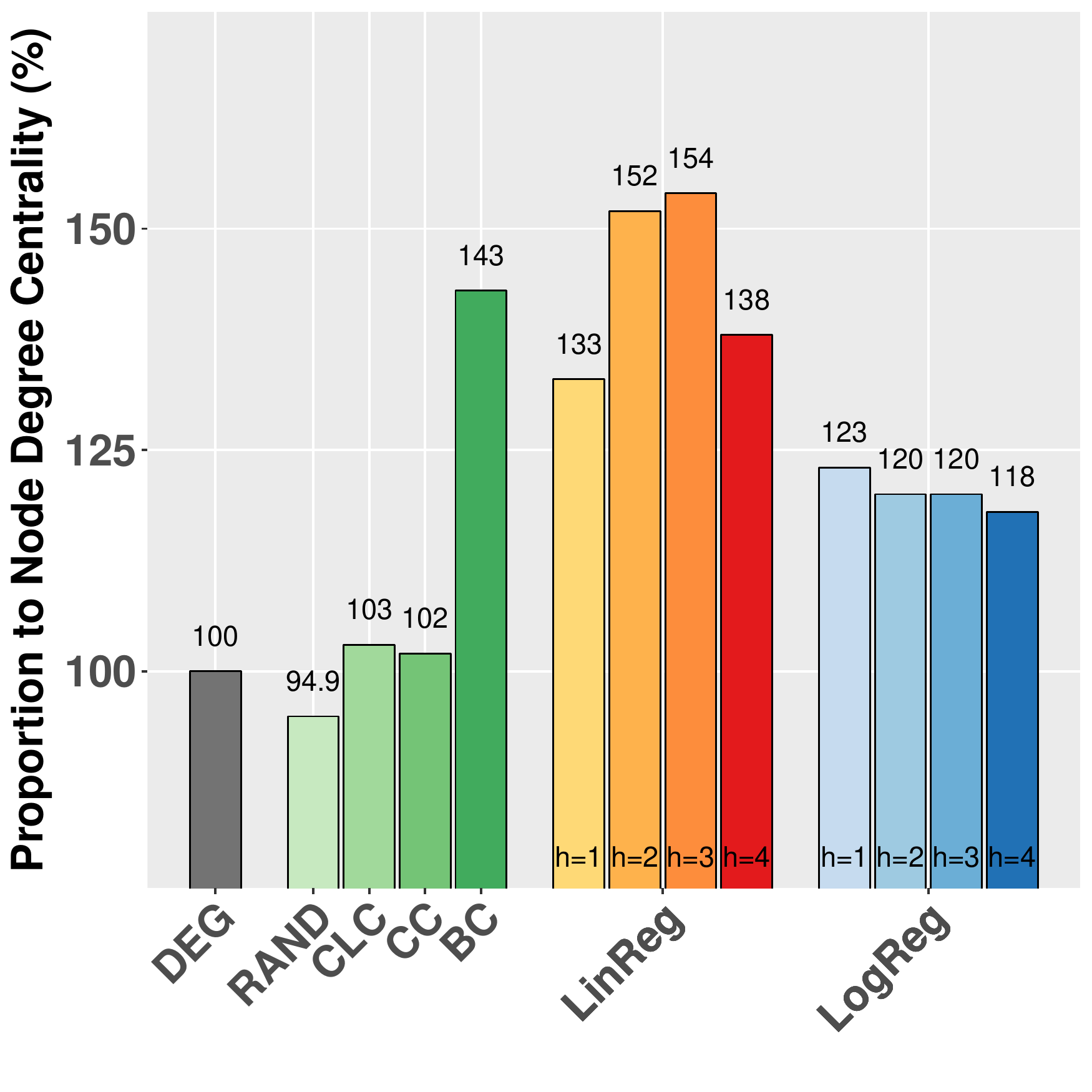}
	}
	\subfloat[Road Network]{
		\includegraphics[width=0.3\linewidth]{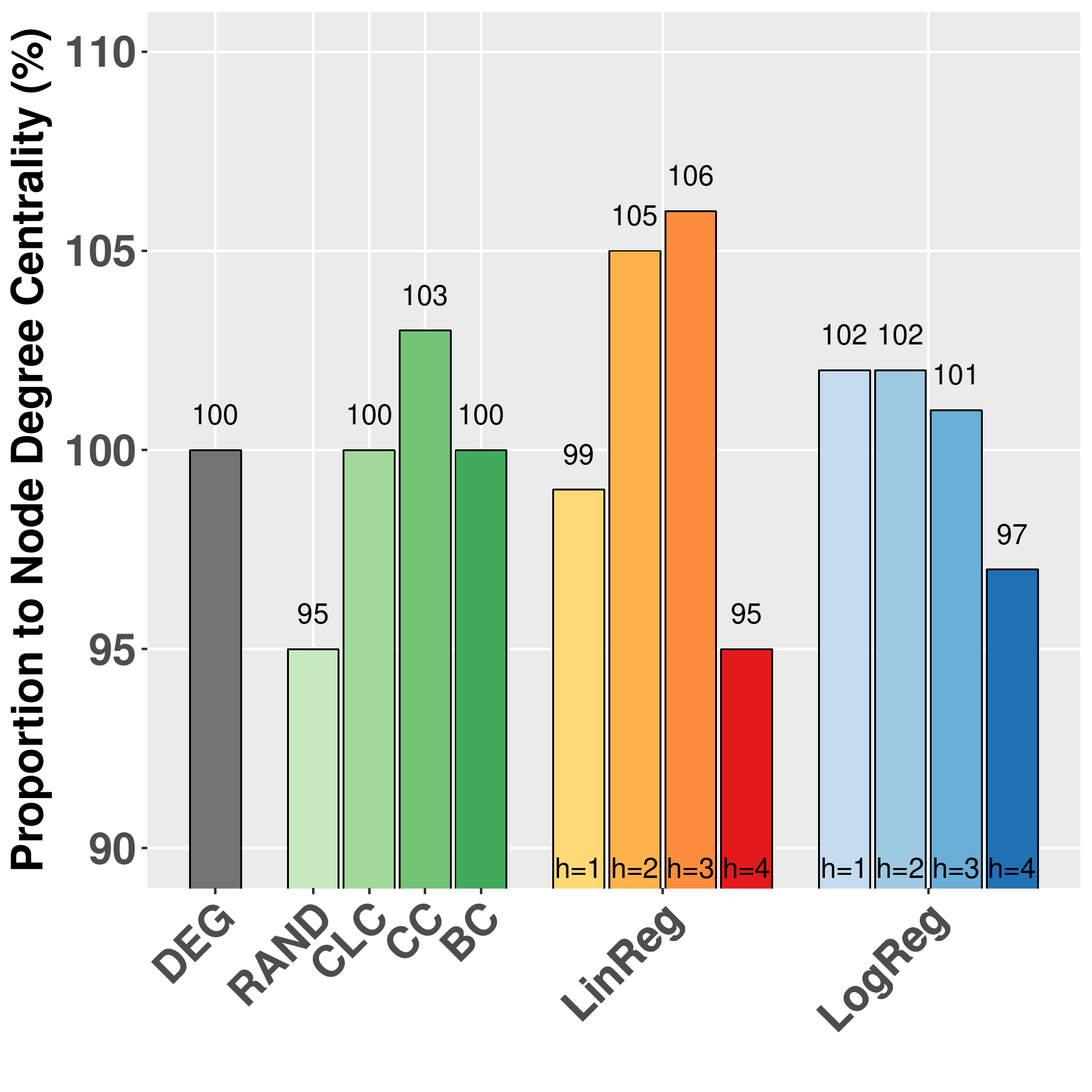}
	}
	\vspace{-0.1in}
	\caption{Performance of Machine Learning Method with Further Looking Ahead and Other Probing Methods.}
	\label{fig:set3} 
	\vspace{-0.25in}
\end{figure*}


\textbf{Datasets.}
Table~\ref{tbl:dataset} describes three types of real-world networks used in our experiments. The road network \cite{li05, brinkhoff02} includes edges connecting different points of interest (gas stations, restaurants) in cites. 
The second type of network includes several snapshots of GnuTella network with nodes represent hosts in GnuTella network topology and edges represent connections between hosts. 
In the third type, we use five collaboration networks that cover scientific collaborations between authors papers in which nodes represent scientists, edges represent collaborations. 
The last type of network models metabolic pathways which are linked series of chemical reactions occurring in cell.

\renewcommand{\arraystretch}{1.2}
\setlength\tabcolsep{2pt}
\begin{table}[hbt] \small
    \vspace{-0.09in}
	\centering
	\caption{Number of new explored nodes at budget $k = 300$ from all implemented probing methods in all datasets.}    
	\begin{tabular}{lllc}
		\addlinespace
		\toprule	
		\backslashbox[1mm]{\textbf{Methods}}{\textbf{Dataset}} & \textbf{GnuTella Net.} & \textbf{CA Net.} & \textbf{Road Net.} \\		
		\midrule 
		CLC & \textbf{\textcolor{red}{2471}} & 3098 & \textbf{\textcolor{red}{358}} \\
		BC & 2341 & \textbf{\textcolor{red}{5052}} & 346 \\
		DEG & 1958 & 3471 & 346\\
		CC & 1999 & 3547 & 326 \\
		PR & 1994 & 3639 & 342 \\
		RAND & 2381 & 2911 & 329 \\
		MaxOutProbe & 1820 & 902 & 28\\	
		\bottomrule
	\end{tabular}
	\label{tbl:bnc}
\end{table}

\textbf{Sampling Methods.}
We adopt the Breadth-First Search sampling ~\cite{Maiya10} to generate subgraphs  for our experiments. In our regression model, for each network that is marked for training in Table ~\ref{tbl:dataset}, we generate samples with number of nodes varies from $0.5\%$ to $10\%$ number of nodes in $G$. The size distribution of the subgraphs follows a power law distribution with power-exponent $\gamma=-1/4$. 
The size of subgraphs used in validation is kept to be roughly  $5\%$ of the network size.

\textbf{Probing Methods.}
We compare performance of our linear regression (\LINREG{}) and logistic regression (\LOGREG{}) probing method with MaxOutProbe\cite{Soundarajan15},  RAND (probe a random gray node),  and centrality-based methods DEG, BC, CC, PR, CLC (see table~\ref{tbl:node_factors}) that probe the node with the highest centrality value in each step. Each considered method has different objective function for selecting nodes to probe. Sharing the same idea, all of these methods rank nodes in set $V^p$ of $G'$, and select the nodes with highest ranking to probe first.



\textbf{Performance Metrics.} For each probing method, we conduct probing at budgets $k \in \{1, 100, 200, 300\}$. During probing process, we compare performance among probing methods using \emph{newly explored nodes}, i.e., the increase in the number of nodes in $G'$ after $k$ probes. For each of the network used in the validation, we report the average results over 50 subgraphs. For obvious reason, we do not use the training network for validation. 

We use statistical programming language R to implement MaxOutProbe and use C++ with igraph framework to implement \LINREG{}, \LOGREG{} and metric-based probing methods. All of our experiments are run on a Linux server with a 2.30GHz Intel(R) Xeon(R) CPU and 64GB memory.

\subsection{Comparison between Machine Learning Methods and Metric-based Probing Methods.}


Table \ref{tbl:bnc} presents probing performance of metric-based methods in each group of networks with the best one is highlighted. Reported result is the number of explored nodes at the end of probing process ($k = 300$).

In Fig. \ref{fig:set2}, we take top 3 metric-based methods in each type of network and compare their performance with \LINREG{} and \LOGREG{}. Here, both machine learning methods are trained by using the benefit of node in only 1-step ahead ($h = 1$, i.e., setting $k = 1$ in Alg.~\ref{alg:agpm}). Their probing performance outperforms metric-based methods in GnuTella network $20\%$ on average. They match performance of other methods in Road network and are $7\%$ worse than \textsf{BC} in Collaboration network. The experiment with \textsf{MaxOutProbe} in Road network faces the problem of non-adaptive behavior of \textsf{MaxOutProbe}: $|V^p|$ of Road network's samples is smaller than maximal budget $k$, besides, due to the very low density property of Road network,  after probing $|V^p|$ nodes in sampled networks \textsf{MaxOutProbe} explores only few new nodes in underlying network; these factors lead to very poor performance of \textsf{MaxOutProbe}. \textsf{MaxOutProbe} also performs poorly as compared with adaptive implementation of metric based methods: It is $67\%$, and $36\%$ worse than \LINREG{} in Collaboration, GnuTella network respectively.

\subsection{Benefits of Looking Into Future Gain.}
For \LINREG{} and \LOGREG{}, we use different functions trained by different labels (marked as $h = 1$, $h = 2$, $h = 3$, $h = 4$ , which indicates regression functions are trained with benefits of node in 1-step, 2-step, 3-step or 4-step ahead). In Fig. \ref{fig:set3}, we evaluate \LINREG{} and \LOGREG{} with \textsf{RAND} and the best metric-based methods for each type of network reported in Table ~\ref{tbl:bnc}.  We use of \text{DEG} as baseline for our performance comparison. Specifically, we take the ratio of number of newly explored nodes of each probing method to number of newly explored nodes of \textsf{DEG} at the end of probing process. We omit result of \textsf{MaxOutProbe} due to its poor performance observed from previous subsection.

\LINREG{} shows consecutive improvement in probing performance with $h = 1$, $h = 2$, $h = 3$ as it ranks candidate nodes in set $V^p$ based on their predicted gain at increasing number of hops far away from them. Overall, \LINREG{} has better performance than \LOGREG{}. The $h = 3$ of \LINREG{} and \LOGREG{} outperforms \textsf{DEG} from $12\%$ to $15\%$ in Collaboration and GnuTella network. The result observed for Road network is from $1\%$ to $5\%$. \LINREG{} with $h = 3$ outperforms the best metric-based method in GnuTella network $11.6\%$; the improvement for Collaboration and Road network are $7.5\%$, $2.2\%$ respectively. The $h = 4$ of \LINREG{} and \LOGREG{} starts decreasing compared with $h = 1$, $h = 2$ and $h = 3$. This indicates the benefit of looking further benefits of selecting a node is true within specific number of hops far away from nodes. 

Among metric-based methods, while \textsf{BC} consistently performs well, the best method varies across the networks. It indicates the underlying network structure impacts performance of metric-based method and it is hard to determine which node centrality method is the best for which type of network. Interestingly, performance of random probing matches or even outperforms \textsf{BC}, \textsf{DEG}, \textsf{CC} in GnuTella networks. It is because GnuTella networks have low average clustering coefficient that makes BFS-based samples of these networks have star structure. Consequently, metric-based methods tend to rank candidate nodes in sampled network with same score. This helps random probing performs better than metric-based methods in this type of network.

\section{Conclusion}
\label{sec:con}
This paper studies the Graph Probing Maximization problem which serves as a fundamental component in many decision making problems. We first prove that the problem is not only NP-hard but also cannot be approximated within any finite factor. We then propose a novel machine learning framework to adaptively learn the best probing strategy in any individual network. The superior performance of our method over metric-based algorithms is shown by a set of comprehensive experiments on many real-world networks.
\bibliographystyle{ieeetr}
\bibliography{adasubmod,cs,webcrawler,socialcrawler,socialbot,linkpredict,approx,dataset,viral}

\appendix
\section*{Supplementary Material for `\textit{Towards Optimal Strategy for Adaptive Probing in Incomplete Networks}'}
\subsection*{Complete proof of Theorem~\ref{theo:hardness}}
\begin{proof}
\balance
To prove Theorem~\ref{theo:hardness}, we construct classes of instances of the problems that there is no approximation algorithm with finite factor.

\begin{figure}[!ht]
	\vspace{-0.25in}
	\centering
    \subfloat[Without global degrees]{
 		\includegraphics[width=0.4\linewidth]{figures/hard_nd.pdf}
        \label{fig:hard_nodegree}
 	}
 	\subfloat[With global degrees]{
 		\includegraphics[width=0.4\linewidth]{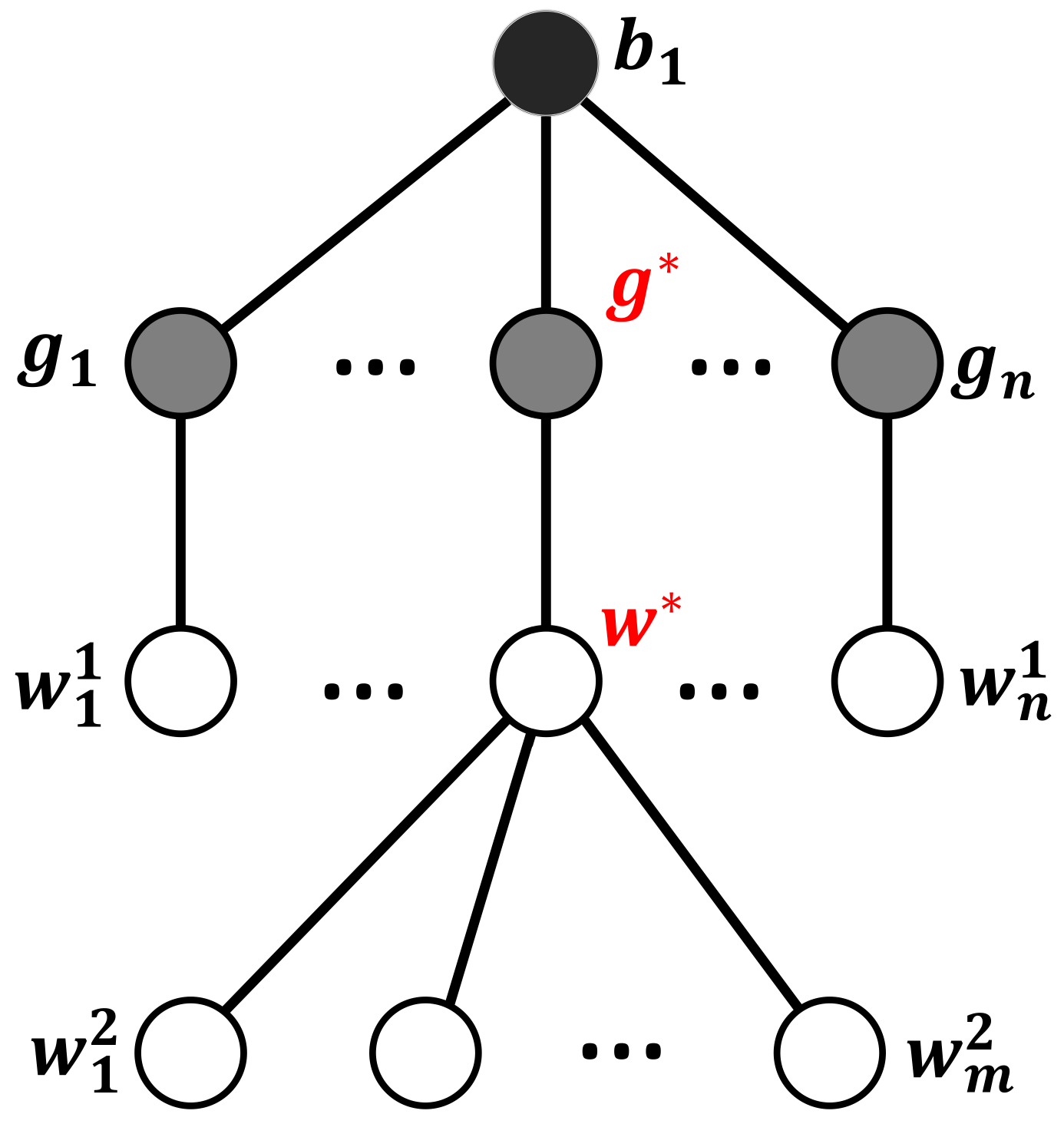}
		\label{fig:hard_degree}
 	}
	\vspace{-0.05in}
	\caption{Hardness illustration on global node degrees.}
	\vspace{-0.1in}
\end{figure}

We construct a class of instances of the probing problems as illustrated in Figure~\ref{fig:hard_nodegree}. Each instance in this class has: a single fully probed node in black $b_1$, $n$ observed nodes in gray each of which has an edge from $b_1$ and one of the observed nodes, namely $g^*$ varying between different instances, having $m$ connections to $m$ unknown nodes in white. Thus, the partially observed graph contains $n+1$ nodes, one fully probed and $n$ observed nodes which are selectable, while the underlying graph has in total of $n+m+1$ nodes. Each instance of the family has a different $g^*$ that $m$ unknown nodes are connected to. We now prove that in this class, no algorithm can give a finite approximate solution for the two problem.

First, we observe that for any $k \geq 1$, the optimal solution which probes the nodes with connections to unknown nodes has the optimal value of $m$ newly explore nodes, denoted by $OPT = m$. We sequentially examine two possible cases of algorithms, i.e., deterministic and randomized.
\begin{itemize}
	\item Consider a deterministic algorithm $\mathcal{A}$, since the $\mathcal{A}$ is unaware of the connections from gray to unknown nodes, given a budget $1 \leq k \ll n$, the lists or sequences of nodes that $\mathcal{A}$ selects are exactly the same for different instances of problems in the class. Thus, there are instances that $g^*$ is not in the fixed list/sequence of nodes selected by $\mathcal{A}$. In such cases, the number of unknown nodes explored by $\mathcal{A}$ is 0. Compared to the $OPT = m$, $\mathcal{A}$ is not a finite factor approximation algorithm.
	\item Consider a randomized algorithm $\mathcal{B}$, similarly to the deterministic algorithm $\mathcal{A}$, $\mathcal{B}$ does not know the connections from the partially observed nodes to white ones. Thus, the randomized algorithm $\mathcal{B}$ essentially selects at random $k$ nodes out of $n$ observed nodes. However, this randomized scheme does not guarantee to select $g^*$ as one of its selected nodes and thus, in many situations, the number of unknown nodes discovered is 0 that invalidates $\mathcal{B}$ to be a finite factor approximation algorithm. In average, $\mathcal{B}$ has $\frac{k}{n}$ chance of selecting $g^*$ which leads to an optimal solutions with $OPT=m$. Hence, the objective value is $\frac{km}{n}$ and the ratio with optimal value is $\frac{k}{n}$. Since $k \ll n$, we can say that the ratio is $O(\frac{1}{n})$ which is not finite in the average case for randomized algorithm $\mathcal{B}$.
\end{itemize}
In both cases of deterministic and randomized algorithms, there is no finite factor approximation algorithm for \AGPM{} or \BGPM{}.
\end{proof}

\balance
\end{document}